\newcommand{\m}[1]{$\mathcal{#1}$}
\newtheorem{theorem}{Theorem}[section]
\newtheorem{lemma}{Lemma}[section]
\newtheorem{observation}{Observation}[section]
\newtheorem{claim}{Claim}[section]
\newtheorem{remark}{Remark}[section]
\title{Maintaining Approximate Maximum Weighted Matching in Fully 
Dynamic Graphs}
\author{Abhash Anand\\
   Department of CSE,\\
   I.I.T. Kanpur, India\\
   {\small\texttt{abhash@cse.iitk.ac.in}} 
 \and
Surender Baswana\\
   Department of CSE,\\
   I.I.T. Kanpur, India\\
   {\small\texttt{sbaswana@cse.iitk.ac.in}}\footnote{
Research supported by the Indo-German Max Planck Center for
Computer Science (IMPECS).} 
 \and
Manoj Gupta\\
   Department of CSE,\\
   I.I.T. Delhi, India\\
{\small\texttt{gmanoj@cse.iitd.ernet.in}}
\and 
Sandeep Sen\\
   Department of CSE,\\
   I.I.T. Delhi, India\\
  {\small\texttt{ssen@cse.iitd.ernet.in}}
}
\begin{document}
\maketitle
\begin{abstract}
We present a fully dynamic algorithm for maintaining approximate 
maximum weight matching in general weighted graphs. 
The algorithm maintains a matching ${\cal M}$ whose weight is at least 
$\frac{1}{8} M^{*}$ where $M^{*}$ is the 
weight of the maximum weight matching. The algorithm achieves an expected 
amortized $O(\log n \log \mathcal C)$ time per edge insertion or deletion, 
where $\mathcal C$ is the ratio of the weights of the highest weight edge to 
the smallest weight edge in the given graph.
Using a simple randomized scaling technique,
we are able to obtain  a matching whith  expected approximation ratio 4.9108.
\end{abstract}
\section{Introduction}
Let $G=(V,E)$ be an undirected graph on $n=|V|$ vertices and $m=|E|$ edges. 
Let there be a weight function $w: E \rightarrow \mathbb{R}^+$ such that
$w(e)$, for any $e \in E$, represents the weight of $e$.
The weight function for a set of edges $M \subseteq E$ is represented by $w(M)$
and is defined as $\sum_{e \in M}w(e)$.
 
%A \emph{matching} in a graph is a set of independent edges in the graph. 
A subset $M$ of $E$ is a "matching" if no vertex of the graph is incident
on more than one edge in $M$. In an unweighted graph, a maximum 
matching is defined as the maximum cardinality matching (MCM).
In an weighted graph, maximum matching is defined as the maximum weight 
matching(MWM). For any $\alpha>1$, a matching is called 
$\alpha$-MWM($\alpha$-MCM) if it is at least $\frac{1}{\alpha}$ factor of 
MWM(MCM).
%,
%and is represented by $\alpha$-MWM($\alpha$-MCM). 

A dynamic graph algorithm maintains a data structure associated with some 
property (connectivity, transitive closure, matching) of a dynamic graph. 
The aim of a dynamic graph algorithm is to handle updates and answer 
queries associated with the set of vertices.
The updates in the graph can be insertion or deletion of edges ($V$ is assumed 
to be fixed).  
The dynamic algorithms which handle only insertions 
are called incremental algorithms and those that can handle only
deletions are called decremental algorithms. 
An algorithm that can handle both insertions and deletions
of edges is called a {\em fully dynamic} algorithm. 
In this paper, we present a fully dynamic algorithm for 
maintaining an approximate maximum weight matching.

\subsection*{Previous Results}

The fastest known algorithm for finding MCM in general graphs is by 
Micali and Vazirani\cite{micali1980v} that
runs in $O(m\sqrt n)$ time. Their algorithm can be used to compute
a matching having size $(1 - \epsilon)$ times the size of maximum matching in 
$O(m/\epsilon)$ time. 
Mucha and Sankowski\cite{mucha2004maximum} designed an algorithm that
computes MCM in $O(n^{\omega})$, where $\omega < 2.376$ is the exponent of $n$
in the fastest known matrix multiplication algorithm.
Relatively, fewer algorithms are known for maintaining
matching in a dynamic graph. The first algorithm was designed by 
Ivkovic and Lloyd\cite{ivkovi1994fully} with 
amortized update time $O(n + m)^{0.7072}$. 
Onak and Rubinfeld\cite{onak2010maintaining} presented an algorithm that
achieves expected amortized polylogarithmic update time and maintains an 
$\alpha$-approximate MCM where $\alpha$ was claimed to be some large 
constant but not explicitly calculated.
Baswana, Gupta and Sen\cite{baswana2011fully} presented a fully dynamic 
randomized algorithm for maintaining maximal matching in expected amortized 
$O(\log n)$ update time. It is well known that a maximal matching is a 2-MCM 
as well.

For computing maximum weight matching Gabow\cite{gabow1990data}
designed an $O(mn + n^2\log n)$ time algorithm.  
Preis\cite{preis1999linear} designed a $O(m)$ time algorithm for 
computing a $2$-MWM. Drake and Hougardy\cite{drake2003simple} designed
a simpler algorithm for the same problem. Vinkemeier and 
Hougardy\cite{vinkemeier2005linear} presented an algorithm to compute 
a matching which is $(2/3 - \epsilon)$ times the size of MWM in 
$O(m/\epsilon)$ time. Duan, Pettie and Su\cite{duan2011scaling} 
presented an algorithm to compute a matching which is 
$(1 - \epsilon)$ times the size of MWM in $O(m\epsilon^{-1}\log\epsilon^{-1})$ 
time. To the best of our knowledge, there have been no sub-linear algorithm for
maintaining MWM or approximate MWM in dynamic graphs.

\subsection*{Preliminaries}

%For a vertex $v$,$d(v)$ represents the degree of the vertex. 
Let $M$ be a matching in a graph $G=(V,E)$. A vertex in the graph 
is called \emph{free} with respect to $M$ if it is not incident on any 
edge in $M$. A vertex which is not free is called \emph{matched}. 
Similarly, an edge is called \emph{matched} if it is in $M$ and is called 
\emph{free} otherwise. If $(u, v)$ is a matched edge, then $u$ is called 
be the \emph{mate} of $v$ and vice versa. 
A matching $M$ is said to be \emph{maximal} if no edge can
be added to the matching without violating the degree bound of one for 
a matched vertex. An alternating path is
defined as a path in which edges are alternately matched and free, while an
augmenting path is an alternating path which begins and ends with free 
vertices. 
\subsection*{Our Results}

We present a fully dynamic algorithm 
that achieves expected amortized $O(\log n \log \mathcal{C})$ update time 
for maintaining $8$-MWM. Here $\mathcal C$ is the ratio of the weights of
the highest weight edge to the smallest weight edge in the given graph.
%Subsequently, we improve the expected value of the approximation factor to 
%6 using a simple randomized scaling technique. 
Our algorithm uses, as a subroutine, the algorithm of Baswana, Gupta and Sen 
\cite{baswana2011fully} for maintaining a maximal matching. 
We state the main result in \cite{baswana2011fully} formally
\begin{theorem}
Starting from an empty graph on $n$ vertices, a maximal matching in the 
graph can be maintained over any arbitrary sequence of $t$ insertion and deletion
of edges in $ O(t  \log n)$ time in expectation and 
$O(t\log n+ n\log^2 n)$ time with high probability.

\label{main-theorem}
\end{theorem}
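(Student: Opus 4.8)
The plan is to maintain a hierarchical partition of the vertices into $L+1 = O(\log n)$ levels together with a maximal matching $M$ that is consistent with this partition. Each vertex $v$ carries a level $\ell(v) \in \{0,1,\dots,L\}$ with $L = \lceil \log_2 n\rceil$, and I would enforce two structural invariants: first, the two endpoints of every matched edge lie at the same level; second, every \emph{free} vertex at level $0$ has all of its neighbours matched, so that no uncovered edge survives at the base and $M$ stays maximal. The guiding intuition is that a vertex sitting at level $i$ has ``paid'' to examine roughly $2^i$ incident edges, and subsequent deletion costs are amortised against this investment. To make the bookkeeping precise I would let each vertex \emph{own} the edges joining it to strictly lower levels, so that every free edge is owned by exactly one endpoint and the number of edges owned by a level-$i$ vertex is kept in the range $[2^i, 2^{i+1})$.

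The core operation is \emph{random settling}. When an insertion or a deletion leaves a vertex $u$ free while it owns at least $2^i$ edges, $u$ rises to level $i$ and chooses its new mate uniformly at random among the $\Omega(2^i)$ candidate endpoints it owns; the chosen mate is pulled up to the same level, and the disturbed edges are redistributed among the affected vertices, possibly forcing some neighbours to drop to lower levels and re-settle recursively. Handling one update thus reduces to locating the newly freed vertices, letting each rise to the highest level its owned-edge count justifies, performing the random choice, and propagating the resulting level changes downward until both invariants are restored.

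For the running-time analysis I would use an epoch-based accounting against an \emph{oblivious} adversary. An \emph{epoch} is the lifespan of a single matched edge created at some level $i$; the work of creating it (scanning and re-assigning ownership of $O(2^i)$ edges) and later tearing it down is $O(2^i)$, which I charge to the epoch. The decisive point is that, because the mate was selected uniformly from $\Omega(2^i)$ candidates while the adversary fixed the update sequence in advance, the expected number of owned edges the adversary deletes before it happens to delete the chosen mate is $\Omega(2^i)$; hence the $O(2^i)$ cost can be charged at rate $O(1)$ in expectation to each such owned-edge deletion. Since an individual edge can be a candidate at each of the $O(\log n)$ levels over the course of the sequence, the expected amortised cost per update telescopes to $O(\log n)$, giving $O(t\log n)$ over all $t$ updates.

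Finally I would upgrade the expectation to a high-probability bound. Conditioned on the (oblivious) update sequence, the epoch lengths at a fixed level behave like independent geometric random variables, so the total cost is a sum of independent bounded contributions to which a Chernoff/Azuma-type concentration inequality applies; the rare but expensive events in which a vertex climbs all the way to level $L$ are absorbed by the additive $O(n\log^2 n)$ term. The main obstacle I anticipate is the amortised analysis rather than the data structure itself: one must arrange the epoch charging so that the \emph{cascading} re-settlements triggered when a vertex rises—which can demote and re-match many neighbours at once—are still covered by the $\Omega(2^i)$ expected epoch length, and one must invoke obliviousness carefully to guarantee that the random mate is independent of the future deletions that eventually end its epoch.
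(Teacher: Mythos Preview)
The theorem you are attempting to prove is not proved in this paper at all: it is quoted verbatim as the main result of Baswana, Gupta and Sen~\cite{baswana2011fully} and used here only as a black box. Consequently there is no ``paper's own proof'' to compare your proposal against.

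That said, your sketch is a faithful high-level summary of the actual argument in~\cite{baswana2011fully}: the $O(\log n)$-level hierarchy with ownership of edges, the random-mate selection from $\Omega(2^i)$ candidates, the epoch-based amortisation against an oblivious adversary, and the concentration step for the high-probability bound are exactly the ingredients used there. The one place where your outline is thinner than the real proof is the point you yourself flag at the end: controlling the cost of \emph{cascades}, where raising one vertex forces many neighbours to drop levels and re-settle. In~\cite{baswana2011fully} this is handled by a careful potential/credit argument that ties the cost of each demotion to the creation cost of the epoch that triggered it; your sketch asserts this can be ``covered by the $\Omega(2^i)$ expected epoch length'' without indicating how, and that is precisely the non-trivial part of the analysis. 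If you intend to reproduce the proof rather than cite it, that step needs to be made explicit.
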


Note that for the above algorithm,
the matching(random bits) at any time is not known to the 
adversary\footnote{The oblivious adversarial model is also used in randomized data-structure 
like universal hashing} for it to choose the updates adaptively.

The idea underlying
our algorithm has been inspired by the algorithm of
Lotker, Patt-Shamir, and Rosen\cite{lotker2007distributed} for maintaining
approximate MWM in distributed environment. Their algorithm 
maintains a 27-MWM in a distributed graph and achieves $O(1)$ rounds to 
update the matching upon any edge insertion or deletion.

%The first subsection describes the basic ideas and the intuition behind the
%algorithm. 
%Section \ref{sub} describes the approximate algorithm by \citet*{lotker2007distributed}. 
%Section \ref{fullydynamic}
%describes our algorithm. 

\subsection*{Overview of our approach} 

Given that there exist very efficient algorithm \cite{baswana2011fully} 
for maintaining maximal matching (hence 2-MCM), it is natural to explore if
these algorithms can be employed for maintaining approximate MWM. 
Observe that MCM is a special case of MWM with all edges 
having the same weight. Since a maximal matching is 2-MCM, it can be observed 
that a maximal matching is 2-MWM in a graph if all its edges have the 
same weight. But this observation does not immediately extend to
the graphs having non-uniform weights on edges.  
Let us consider the case when the edge weights are within a range, say, 
$[\alpha^i, \alpha^{i + 1})$, where $\alpha >1$ is a constant. 
In such a graph the maximal matching gives a $2 \alpha$ approximation of 
the maximum weight matching. 
So, a maximal matching can be used as an approximation for MWM in a graphs 
where the ratio of weights of maximum weight edge to the smallest weight edge 
is bounded by some constant. To exploit this observation, we partition the
edges of the graph into levels according to their weight. We select a constant
$\alpha>1$ whose value will be fixed later on. Edges at level $i$ have weights 
in the range $[\alpha^i, \alpha^{i + 1})$ and the set of edges at level $i$ is 
represented by $E_i$, viz., 
$\forall e \in E_i, w(e) \in [\alpha^i, \alpha^{i + 1})$.

Observe that in this scheme of partitioning, any edge is present only at one 
level. The subgraph at level $i$ is defined as $G_i = (V, E_i)$. We maintain
a maximal matching $M_i$ for $G_i$ using the algorithm of Baswana, Gupta and 
Sen\cite{baswana2011fully}. 
The maximal matching at each level provides an approximation
for the maximum weight matching at that level. However,
$ \cup_i M_i$ is not necessary a matching since a vertex may have multiple 
edges incident on it from $ \cup_i M_i$. 
%Our algorithm works as follows.
Let $\mathcal{H} = (V, \bigcup M_i)$ be the subgraph of $G$ 
having only those edges which are part of the maximal matching at some level. 
Our algorithm  maintains a matching in the subgraph $\mathcal{H}$ 
which is guaranteed to be $8$-MWM for the original graph $G$. 
The algorithm builds on the  algorithm in \cite{baswana2011fully}, though 
the analysis of algorithm for maintaining 8-MWM is not straightforward.

\section{Fully Dynamic $8$-MWM}
\label{fullydynamic}
Our algorithm maintains a partition of edges according to their levels. 
A maximal matching $M_i$ is maintained at each level using the fully dynamic 
algorithm in \cite{baswana2011fully}.  While processing any insertion or
deletion of an edge, this algorithm will leads to change in the status of edges
from being matched to free and vice-versa. This leads to deletion or insertion
of edges from/to ${\cal H}$. However, since the algorithm 
\cite{baswana2011fully} achieves expected amortized $O( \log n )$ time per
update, so the expected amortized number of deletions and insertions of edges
in ${\cal H}$ will also be $O( \log n )$ only. Our algorithm 
will maintain a matching \m M in the subgraph \m H taking advantage of the 
hierarchical structure of \m H. Since \m H is formed by the union of matchings
at various levels, a vertex can have at most one neighbor at each level. The
matching \m M is maintained such that for every edge of \m H  which is not in 
\m M there must be an edge adjacent to it at a higher level which is in \m M. 
For an edge $e$, let $Level(e)$ denote its level. In precise words, the 
algorithm maintains the following invariant after every update. 
\bigskip

$\forall e \in E(\mathcal H)$, either $e \in$ \m M or $e$ is adjacent to an edge
$e' \in$ \m M such that $Level(e') > Level(e)$.

\subsection*{Notations}

The algorithm maintains the following information at each stage.
%requires information to be maintained about the graph and its 
%vertices and edges. We show how the following are maintained - 

\begin{itemize}
\item $M_l$ - A maximal matching at the level $l$.
\item $Free(v)$ - A variable which is true if $v$ is free in the matching \m M,
and false otherwise.
\item $Mate(v)$ - The mate of $v$, if it is not free.
\item $Level((u, v))$ or $Level(e)$ - The level at which the edge $e$
or the edge $(u, v)$ is present according to the condition that 
$\forall e \in G_i, w(e) \in [\alpha^i, \alpha^{i + 1})$.
\item $OccupiedLevels$ - The set of levels where there is at least one edge 
from \m H.
\item $L^{max}$ - The highest occupied level.
\item $L^{min}$ - The lowest occupied level.
\item $N(v,i)$ -  The neighbor of $v$ in $M_i$, if any, and $null$ otherwise. 
\item \m M - The matching maintained by our algorithm.
\end{itemize}

For a better understanding of our fully dynamic algorithm, the following 
section describes its static version for computing \m M in the graph \m H. 
%The static algorithm gives way to a dynamic algorithm in the next section. 

\subsection*{Static Algorithm to obtain \m M from \m H}

\begin{procedure}
\caption{StaticCombine( )}
$\mathcal M = \phi$\;
\For{$i = L^{max}$ to $L^{min}$} {
	$\mathcal M = \mathcal M \cup M_i$\;
	\For{$(u, v) \in M_i$} {
		\For{$j = i - 1$ to $L^{min}$} {
			\For{$(x, y) \in M_j$} {
				\If {u = x or u = y or v = x or v = y} {
					$M_j = M_j \setminus \{(x, y)\}$\;
				}
			}
		}
	}
}
\end{procedure}

The static algorithm divides the edges of the graph $G$ into levels and 
 a maximal matching $M_i$ is obtained for each of the levels. Using these
maximal matchings we get the graph \m H. Thereafter the level numbers $L^{max}$ 
and $L^{min}$ are computed and the procedure \StaticCombine is used. 

The procedure \StaticCombine starts by picking all the edges in \m H at the 
highest level
and adds them to the matching \m M. For every edge $(u, v)$ added to the matching
\m M, all the edges in the graph \m H incident on $u$ and $v$ have to be
removed from the graph. 
The same process is repeated for the next lower level.
Note that every edge in \m H is either in the matching \m M or its
neighboring edge at some higher level is in \m M and thus 
the invariant is maintained. Observe that the matching
\m M is a maximal matching in \m H because of the way it is being computed.

\subsection*{Dynamic Algorithm to maintain \m M}

After each insertion or deletion of any edge, our algorithm maintains 
a matching \m M satisfying the invariant described above. 
Our algorithm processes insertions and deletions of edges in \m H to update 
\m M. An addition and deletion of the edges in \m H is caused due to 
addition/deletion of an edge in the original graph $G$. 
We describe some basic procedures first. Then the procedures for handling
addition and deletion of edges in \m H are described and finally the 
procedures for handling addition and deletion of edges in $G$ are described.

\begin{procedure}
\caption{AddToMatching($u, v$)}
$Free(u) = False$; $Free(v) = False$\;
$Mate(u) = v$; $Mate(v) = u$\;
$\mathcal M = \mathcal M \bigcup \{(u, v)\}$\;
\end{procedure}

\begin{procedure}
\caption{DelFromMatching($u, v$)}
$Free(u) = True$; $Free(v) = True$\;
$\mathcal M = \mathcal M \setminus \{(u,v)\}$\;
\end{procedure}

The procedure \AddToMatching adds an edge to the matching \m M updating the
free and mate fields accordingly. The procedure \DelFromMatching 
deletes an edge from the matching \m M updating the
mate and the free fields correctly. Both of them execute in $O(1)$ time.

\begin{procedure}[h]
\caption{HandleFree($u, lev$)}
\For{l from lev to $L^{min}$}{
	$v = N(u,l)$\;
	\If{$v$ is not null} {
		\If{v is free} {
			\AddToMatching($u, v$)\;
			\Return\;
		} \ElseIf{Level((v, Mate(v))) $<$ l} {
			$v' = Mate(v)$\;
			\DelFromMatching($v, v'$)\;
			\AddToMatching($u, v$)\;
			\HandleFree($v'$, $Level((v, v'))$)\;
			\Return\;
		}
	}
}
\end{procedure}

The procedure \HandleFree takes as an input a vertex $u$ which has become 
free in \m M and a level number $lev$ from where it has to start looking for a 
mate. Note that it follows from the invariant that $u$ does not have any free
neighbor at any level above $lev$. The procedure \HandleFree proceeds as
follows. It searches for a neighbor of $u$ in the decreasing order 
of levels starting from $lev$. In this process, on reaching a level $l \le lev$
if it finds a free neighbor of $u$, the corresponding edge is added to the 
matching \m M and the procedure stops. 
Otherwise if some neighbor $v$ is found which already
has a mate at some lower level than $l$, then notice that we are violating the 
invariant as $(u, v)$ does not belong to \m M and is neighboring to an edge 
in \m M at a lower level.
So, the edge $(v, Mate(v))$ is removed from the matching \m M, and the edge 
$(u, v)$ is added to the matching \m M. This change results in a free vertex 
which is at a lower level and so we proceed recursively to process it. 
Note that the recursive calls to \HandleFree are all with lower level 
numbers. So, the procedure takes $O(L^{max} -L^{min})$ time.

\begin{procedure}[h]
\caption{AddEdge($u, v$)}
$l = Level((u, v))$\;
$N(u,l) = v$\;
$N(v,l) = u$\;
\If{u is free and v is free} {
	\AddToMatching($u, v$)\;
} \ElseIf{u is free and v is not free} {
	\If {Level((u, Mate(u))) $<$ l} {
		$v' = Mate(v)$\;
		\DelFromMatching($v, v'$)\;
		\AddToMatching($u, v$)\;
		\HandleFree($v', Level((v, v'))$)\;
	}
} \ElseIf{u is not free and v is free} {		
	\If {Level((v, Mate(v))) $<$ l} {
		$u' = Mate(u)$\;
		\DelFromMatching($u, u'$)\;
		\AddToMatching($u, v$)\;
		\HandleFree($u', Level((u, u'))$)\;
	}
} \ElseIf {Level((v, Mate(v))) $<$ l and Level((u, Mate(u))) $<$ l} {
		$u' = Mate(u)$; $v' = Mate(v)$\;
		\DelFromMatching($u, u'$); \DelFromMatching($v, v'$)\;
		\AddToMatching($u, v$)\;
		\HandleFree($u', Level((u, u'))$)\;
		\HandleFree($v', Level((v, v'))$)\; 
} 
\end{procedure}

\begin{procedure}[h]
\caption{DeleteEdge($u, v$)}
$l = Level((u, v))$\;
$N(u,l) = null$\;
$N(v,l) = null$\;
\If{(u, v) $\in \mathcal M$} {
	\DelFromMatching($u, v$)\;
	\HandleFree($u, l$)\;
	\HandleFree($v, l$)\;
}
\end{procedure}

The procedure \AddEdge handles addition of edges to \m H. Suppose the edge 
$(u, v)$ is added to \m H. If both $u$ and $v$ are free with respect to \m M,
then the edge $(u, v)$ is added to the matching \m M. Otherwise, there must be
some edge(s) in \m M adjacent to $(u,v)$. This follows due to the fact that
\m M is a maximal matching in \m H.
%So we determine whether the edge in \m M adjacent to $(u,v)$. 
%is at a higher level than $(u,v)$.
If $(u, v)$ is adjacent to a higher level edge in \m M, then nothing is done.
If $(u, v)$ is adjacent to some lower level edge(s) in \m M, then notice that
the invariant maintained by the algorithm gets violated.
Therefore, we remove these lower level edge(s) (adjacent to $(u,v)$) 
from the matching \m M and adds the edge $(u, v)$ to the matching. 
At most $2$ vertices can become free due to the addition of this edge to \m M
and we handle them using the procedure \HandleFree.
If $u'$ was the previous mate of $u$, then the edge $(u,u')$ is removed from 
\m M. Since \m M satisfied the invariant before addition of this edge, 
all the neighboring edges of $u'$ at higher level than $Level(u, u')$ are 
matched to a vertex at higher levels. So $u'$ has to start looking for mates 
from the level of $(u, u')$. The procedure makes a constant number of calls 
to \HandleFree and thus runs in $O(L^{max} - L^{min})$ time.

The procedure \DeleteEdge does nothing if an unmatched edge from \m H is 
deleted. If a matched edge $(u,v)$ is deleted at level $l$, it calls 
\HandleFree for both the end points to restore the invariant in the matching. 
\HandleFree is called with the level $l$ because our invariant 
implies that all the neighbors of $u$ and $v$ are matched at higher levels.
So they cannot find a mate at higher levels.
This again takes $O(L^{max} - L^{min})$ time.

\begin{procedure}[h]
\caption{EdgeUpdate($u, v, type$)}
$l = Level((u, v)) = \lfloor \log_{\alpha} w(u, v) \rfloor $\;
\If {type is addition and $M_l$ is $\phi$} {
	$OccupiedLevels = OccupiedLevels \bigcup \{l\}$\;
	Update $L^{max}$ and $L^{min}$\;
}
Update $M_l$ using the algorithm in \cite{baswana2011fully}\;
\If {type is deletion and $M_l$ is $\phi$} {
	$OccupiedLevels = OccupiedLevels \setminus \{l\}$\;
	Update $L^{max}$ and $L^{min}$\;
}
Let \m D be the set of edges deleted from $M_l$ in step 5\;
Let \m A be the set of edges added to $M_l$ in step 5\;
\For{(x, y) $\in$ \m D} {
	\DeleteEdge($x, y$)\;
}
\For{(x, y) $\in$ \m A} {
	\AddEdge($x, y$)\;
}
\end{procedure}

The function \EdgeUpdate handles addition and deletion of an edge in 
$G$. It finds out the level of the edge and updates the maximal matching at that
level using the algorithm of Baswana, Gupta and Sen \cite{baswana2011fully}. 
It updates the {\em OccupiedLevels} set accordingly. This set is required
because the values of $L^{max}$ and $L^{min}$ are to be maintained. The algorithm
\cite{baswana2011fully} can be easily augmented to return the set of 
edges being added or deleted from the maximal matching in each update. 
As discussed before, expected amortized $O(\log n)$ edges 
change their status per update. Our algorithm processes these updates in 
\m H as described above. So, overall our algorithm has an
expected amortized update time of $O(\log n \cdot (L^{max} - L^{min}))$. 
Let $e^{max}$ and $e^{min}$ represent the edges having the maximum and the 
minimum weight in the graph. Recall that \m C $= w(e^{max})/w(e^{min})$.

$$L^{max} - L^{min} < \log_{\alpha} w(e^{max}) - \log_{\alpha} w(e^{min}) + 1 = O \left (\log\frac{w(e^{max})}{w(e^{min})} \right ) = O(\log \mathcal C)$$

So we can claim that
\begin{claim}
The expected amortized update time of the algorithm per edge insertion
or deletion is $O(\log n\log \mathcal C)$.
\end{claim}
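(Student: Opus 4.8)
The plan is to factor the amortized cost of a single update to $G$ into two independent quantities: the expected amortized number of edges whose matched/free status changes in the underlying per-level maximal matchings $M_l$, and the worst-case cost of propagating one such change through \m H. I would then bound the first by $O(\log n)$ using Theorem~\ref{main-theorem} and the second by $O(L^{max} - L^{min})$, and finally invoke the displayed inequality $L^{max} - L^{min} = O(\log \mathcal C)$.

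For the first quantity, I would appeal directly to Theorem~\ref{main-theorem}: an edge insertion or deletion in $G$ lands at a single level $l$, where \EdgeUpdate calls the algorithm of \cite{baswana2011fully} on $M_l$. That algorithm runs in $O(\log n)$ expected amortized time per update, and since reporting or effecting each matched/free transition costs at least constant time, the number of edges added to or removed from $M_l$ (the sets \m A and \m D in \EdgeUpdate) is itself $O(\log n)$ in expectation, amortized. Thus each update to $G$ triggers only $O(\log n)$ calls to \AddEdge and \DeleteEdge.

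For the second quantity, I would argue that a single call to \AddEdge or \DeleteEdge runs in $O(L^{max} - L^{min})$ time. Each such call performs $O(1)$ top-level work plus at most two invocations of \HandleFree. The crux is that \HandleFree, although recursive, only ever recurses with a strictly smaller starting level: its loop runs from the current level down to $L^{min}$, and whenever it displaces a matched edge it recurses on the freed vertex at a still lower level. Hence the levels examined across the entire recursion form a strictly decreasing sequence, so the total work of one \HandleFree call---and therefore of one \AddEdge or \DeleteEdge---is $O(L^{max} - L^{min})$. Combining the two bounds gives expected amortized cost $O(\log n \cdot (L^{max} - L^{min}))$ per update to $G$, and $L^{max} - L^{min} = O(\log \mathcal C)$ finishes the claim; the bookkeeping of $OccupiedLevels$, $L^{max}$, and $L^{min}$ adds at most $O(\log \mathcal C)$ and is absorbed.

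The main obstacle, I expect, is the monotone-descent argument for \HandleFree: one must verify that the recursion never revisits a level and that the invariant guarantees a freed vertex has no available mate strictly above its current level, so that beginning the search strictly below is both correct and sufficient to charge the cost against a decreasing level sequence. Justifying that the \emph{count} of status changes---not merely the running time of \cite{baswana2011fully}---is $O(\log n)$ amortized is the other point requiring care, though it follows once one notes that each reported change is charged against a distinct unit of work already accounted for in Theorem~\ref{main-theorem}.
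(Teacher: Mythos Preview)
Your proposal is correct and mirrors the paper's own argument essentially step for step: the paper too bounds the number of status changes in $M_l$ per update by the expected amortized $O(\log n)$ guarantee of \cite{baswana2011fully}, bounds each \AddEdge/\DeleteEdge call by $O(L^{max}-L^{min})$ via the strictly-decreasing-level recursion of \HandleFree, and then multiplies and applies $L^{max}-L^{min}=O(\log\mathcal C)$. The two subtleties you flag (monotone descent of \HandleFree and bounding the \emph{count} of matched/free transitions) are exactly the points the paper handles, though somewhat more tersely than you do.
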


In the next section we analyze the algorithm to prove that the matching 
\m M maintained by it at each stage is indeed $8$-MWM.

\subsection{Analysis}

To get a good approximation ratio, we bound the weight of $M^*$ with the weight of \m M. 
We now state a few
simple observations which help in understanding the analysis.

\begin{observation}
\label{obs1}
Since $M^*$ is a matching, no two edges of $M^*$ can be incident on the same vertex.
\end{observation}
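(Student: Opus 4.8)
The plan is to derive the statement directly from the definition of a matching given in the Preliminaries, since Observation~\ref{obs1} is essentially a restatement of that definition applied to the particular matching $M^*$. First I would recall that $M^*$ denotes the maximum weight matching of $G$, and that by definition a matching is a subset $M \subseteq E$ with the property that no vertex of $G$ is incident on more than one edge of $M$. Thus the defining property of $M^*$ already contains the assertion to be proved, and no additional structure needs to be invoked.

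To make the single logical step explicit I would argue by contradiction. Suppose, for the sake of contradiction, that two distinct edges $e_1, e_2 \in M^*$ shared a common endpoint $v$. Then $v$ would be incident on both $e_1$ and $e_2$, so $v$ is incident on two edges of $M^*$, which directly contradicts the requirement that a matching leave every vertex incident on at most one of its edges. Hence no such pair $e_1, e_2$ exists, and no two edges of $M^*$ can be incident on the same vertex, as claimed.

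I expect no genuine obstacle here: the observation is a definitional unpacking rather than a theorem with combinatorial content, so there is no case analysis, construction, or estimate to carry out. Its purpose is purely to record a fact that will be used repeatedly in the subsequent weight-charging analysis, where one bounds $w(M^*)$ against $w(\mathcal{M})$ by charging each edge of $M^*$ to its endpoints; the guarantee that at most one edge of $M^*$ meets any given vertex is precisely what keeps that charging scheme well defined.
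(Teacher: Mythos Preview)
Your proposal is correct and matches the paper's treatment: the paper states Observation~\ref{obs1} without proof, as it is an immediate consequence of the definition of a matching given in the Preliminaries. Your explicit contradiction argument is a valid elaboration of what the paper leaves implicit.
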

\begin{observation}
\label{obs2}
For any edge $e \notin M^*$, there can be at most two edges of $M^*$ which are adjacent to $e$, one for each endpoint of $e$.
\end{observation}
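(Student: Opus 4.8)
The plan is to derive Observation~\ref{obs2} directly from Observation~\ref{obs1}, which is nothing more than the defining property of a matching. Write $e = (u,v)$ for the edge in question. By definition an edge $e'$ is \emph{adjacent} to $e$ precisely when it shares an endpoint with $e$, that is, when $e'$ is incident on $u$ or incident on $v$. So I would begin by reducing the count of edges of $M^*$ adjacent to $e$ to the count of edges of $M^*$ incident on each of the two endpoints of $e$.

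Next I would bound each endpoint's contribution separately. By Observation~\ref{obs1}, since $M^*$ is a matching, no vertex is incident on more than one edge of $M^*$; hence at most one edge of $M^*$ is incident on $u$ and at most one edge of $M^*$ is incident on $v$. Summing the two contributions yields at most two edges of $M^*$ adjacent to $e$, one associated with each endpoint, which is exactly the claimed bound.

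The only place the hypothesis $e \notin M^*$ enters is to guarantee that the edges we are counting are genuinely distinct from $e$ itself; were $e$ in $M^*$, the unique $M^*$-edge at $u$ could coincide with $e$ and should not be counted as adjacent. Because $e \notin M^*$, any edge of $M^*$ incident on $u$ or on $v$ is different from $e$, so the tally above counts only true neighbors. I do not expect any genuine obstacle here: the statement is an immediate structural consequence of the matching property, and the entire argument is a one-line application of Observation~\ref{obs1} together with the elementary fact that an edge has exactly two endpoints.
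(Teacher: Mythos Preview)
Your argument is correct and matches the paper's intent: the paper states Observation~\ref{obs2} without proof, treating it as an immediate consequence of the matching property (Observation~\ref{obs1}) together with the fact that an edge has two endpoints. There is nothing to add.
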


To bound the weight of $M^*$ using the weight of \m M, we define a  many 
to one mapping $\phi : M^* \rightarrow {\cal M}$.
This mapping maps 
every edge in $M^*$ to an edge in \m M. Using this mapping, we find out all the
edges which are mapped to an edge $ e \in \mathcal M$ and bound their 
weight using the weight of $e$. Let this set be denoted by $\phi^{-1}(e)$.
For an edge $e^* \in M^*$, the mapping is defined as:

\begin{enumerate}
\item If $e^* \in E($\m H$)$ and $e^* \in$ \m M then  $\phi(e^*) = e^*$.
\item If $e^* \in E($\m H$)$ and $e^* \notin$ \m M then our invariant ensures 
that $e^*$ is adjacent to an edge
$e\in {\cal M}$ such that Level$(e)$ $>$ Level$(e^*)$. In this case, we define
$\phi(e^*)=e$. 
If $e^*$ is adjacent to two matched edges in \m M, map $e^*$ to any one of 
them. As a rule, if two edges are available for mapping, then we will map 
$e^*$ to any one of them.
\item If $e^* \notin E($\m H$)$, then consider its level, say $i$.
Since we maintain a maximal matching $M_i$ at level $i$, 
at least one of the end point of $e^*$ must be present in $M_i$. 
Let $e \in M_i$ be adjacent to $e^*$. If $e \in$ \m M, we define $\phi(e^*)=e$.
\item If $e^* \notin E($\m H$)$ and the edge $e\in M_i$ adjacent to $e^*$ 
is not present in ${\cal M}$ then $e$ must be adjacent to an edge 
$e' \in$ \m M such that Level$(e') >$ Level$(e)$. In this case, we define
$\phi(e^*)=e'$.
\end{enumerate}

Now that we have defined a many to one mapping, we find out the edges of $M^*$ which are
mapped to an edge $e \in$ \m M. An edge which is mapped to $e$ can either be $e$ itself or
be adjacent to $e$ or not adjacent to $e$.
If an edge of $M^*$, which is mapped to $e \in$ \m M,  is $e$ itself or is adjacent to $e$ , 
then it is called a \emph{Directly
mapped edge}. An edge of $M^*$ which is mapped to $e \in$ \m M and is not adjacent to $e$ is called an \emph{Indirectly
mapped edge}. Let $\phi^{-1}_{D}(e)$ and $\phi^{-1}_{I}(e)$ be the set of directly mapped and indirectly mapped edges
respectively for an edge $e \in M$. Directly mapped edges are of type $1$, $2$ and $3$ 
and indirectly mapped edges are of type $4$. An edge $e^* \in M^*$ can either be in $E($\m H$)$ or not. 
If it is in $E($\m H$)$, then it is mapped using type 1 and type 2 mapping else it is mapped using type 3 and
type 4 mapping. This implies all
the edges in $M^*$ are mapped by $\phi$.

\begin{comment}
Since we maintain a maximal matching $M_i$ at each level $i$,
this ensures that an edge $e \in M^*$ at level $i$ is in \m M or is adjacent to an edge in \m M. This ensures that all
the edges in $M^*$ are mapped by $\phi$.
\end{comment}
%A directly mapped edge maybe of the first typeas described above. 

If an edge $e \in$ \m M has an edge of type $1$ directly mapped to it, then 
$e$ will not have any other edge directly mapped to it. 
This follows from the definition of a directly mapped edge and 
Observation \ref{obs1}. There can be at most two directly mapped edges of the 
second type(Observation \ref{obs2}).
These edges mapped to $e$ are always from a level $< Level(e)$.
There can be at most two directly mapped edges of type $3$ also if they are not in \m H
but are adjacent to $e$. By Observation \ref{obs2}, there can only be two
such edges. 
\begin{claim} 
There can be at most two directly mapped edges to an edge $e \in \mathcal M$ at any level.
\label{claim1}
\end{claim}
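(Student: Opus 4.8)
The plan is to derive the bound directly from the matching property of $M^*$ as captured in Observations \ref{obs1} and \ref{obs2}. The key first step is to unify the three types of directly mapped edges. By the definition of a directly mapped edge, any $e^* \in \phi^{-1}_{D}(e)$ is either $e$ itself (type $1$) or is adjacent to $e$ (types $2$ and $3$); in every case $e^*$ is an edge of $M^*$ that shares a vertex with $e = (u,v)$. Hence I would observe that the whole set $\phi^{-1}_{D}(e)$ is contained in the set of edges of $M^*$ that meet $\{u,v\}$, so it suffices to bound the size of this latter set, which depends only on the matching structure of $M^*$ and not on the particular mapping rule that placed each edge there.

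I would then split into two cases according to whether $e \in M^*$. If $e \in M^*$, then a type-$1$ edge is mapped to $e$, and Observation \ref{obs1} forbids any other edge of $M^*$ from being incident on $u$ or $v$; thus $e$ is the unique edge of $M^*$ touching $\{u,v\}$, and there is exactly one directly mapped edge. If $e \notin M^*$, then no type-$1$ edge is present, every directly mapped edge is genuinely adjacent to $e$, and Observation \ref{obs2} guarantees that at most two edges of $M^*$ can be adjacent to $e$, one for each endpoint. In either case $|\phi^{-1}_{D}(e)| \le 2$, which is the desired conclusion.

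Since the statement is essentially a corollary of the fact that $M^*$ is a matching, I do not expect a genuine obstacle. The one point that warrants care is the unification step: one must confirm that the type-$2$ and type-$3$ rules indeed produce edges \emph{adjacent} to $e$ (while type $1$ produces $e$ itself), so that no directly mapped edge escapes the incidence count at $\{u,v\}$. This is immediate from the mapping definitions, after which the two-case analysis closes the argument. I would also note that this claim bounds only the directly mapped edges; the indirectly mapped (type-$4$) edges are governed separately, and it is the combination of the two bounds that will eventually drive the weight comparison between $M^*$ and $\mathcal M$.
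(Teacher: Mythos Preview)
Your argument is correct and matches the paper's reasoning: the paper likewise derives the bound from Observations~\ref{obs1} and~\ref{obs2}, noting that a type-$1$ mapping excludes any other directly mapped edge, while types~$2$ and~$3$ are edges of $M^*$ adjacent to $e$ and hence number at most two. Your unification step---observing that every directly mapped edge is an $M^*$-edge meeting $\{u,v\}$---is a slightly cleaner packaging of the same idea, but the substance is identical.
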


The total weight of the
edges directly mapped to $e$ will be maximum
when both of them are from the same level as $e$.
Assume that $e$ is at level $i$. Summing the weights of the edges which are 
directly mapped to $e$, we get
\begin{equation}
\sum_{e^* \in \phi^{-1}_{D}(e)} w(e^*) < 2 * \alpha^{i + 1} < 2 \alpha w(e)
\label{eq1}
\end{equation}

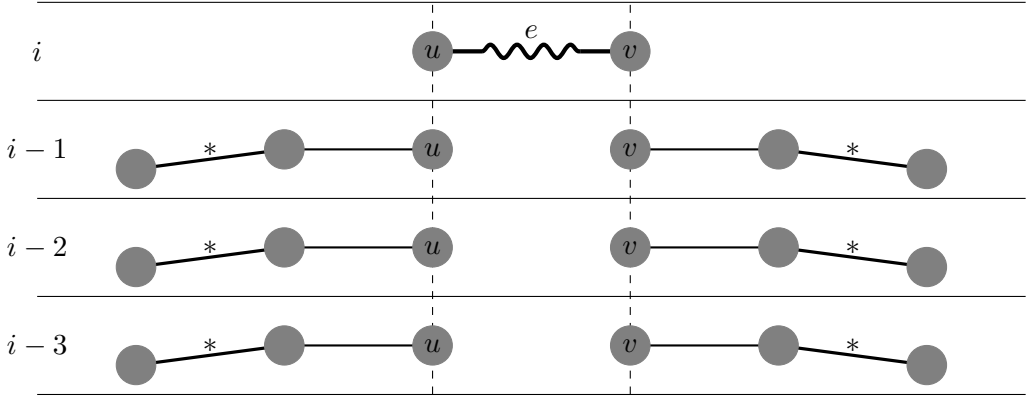
\begin{figure}
\begin{tikzpicture}[scale=1.3]
\tikzstyle matched=[ultra thick]
\tikzstyle vertical=[dashed]
\tikzstyle inH=[thick]
\tikzstyle inM=[very thick]
\draw (0, 0) -- (10, 0);
\draw (0, 1) -- (10, 1);
\draw (0, 2) -- (10, 2);
\draw (0, 3) -- (10, 3);
\draw (0, 4) -- (10, 4);
\draw[vertical] (4, 0) -- (4, 4);
\draw[vertical] (6, 0) -- (6, 4);
\draw[matched] (4, 3.5) -- (4.5, 3.5);
\draw[snake=snake,matched] (4.5, 3.5) -- (5.5, 3.5);
\draw[matched] (5.5, 3.5) -- (6, 3.5);

\draw[inH] (4, 2.5) -- (2.5, 2.5);
\draw[inH] (4, 1.5) -- (2.5, 1.5);
\draw[inH] (4, 0.5) -- (2.5, 0.5);
\draw[inH] (6, 2.5) -- (7.5, 2.5);
\draw[inH] (6, 1.5) -- (7.5, 1.5);
\draw[inH] (6, 0.5) -- (7.5, 0.5);

\draw[inM] (2.5, 2.5) -- (1, 2.3);
\draw[inM] (2.5, 1.5) -- (1, 1.3);
\draw[inM] (2.5, 0.5) -- (1, 0.3);
\draw[inM] (7.5, 2.5) -- (9, 2.3);
\draw[inM] (7.5, 1.5) -- (9, 1.3);
\draw[inM] (7.5, 0.5) -- (9, 0.3);

\filldraw[gray] (2.5, 2.5) circle (0.2cm);
\filldraw[gray] (2.5, 1.5) circle (0.2cm);
\filldraw[gray] (2.5, 0.5) circle (0.2cm);
\filldraw[gray] (7.5, 2.5) circle (0.2cm);
\filldraw[gray] (7.5, 1.5) circle (0.2cm);
\filldraw[gray] (7.5, 0.5) circle (0.2cm);

\filldraw[gray] (1, 2.3) circle (0.2cm);
\filldraw[gray] (1, 1.3) circle (0.2cm);
\filldraw[gray] (1, 0.3) circle (0.2cm);
\filldraw[gray] (9, 2.3) circle (0.2cm);
\filldraw[gray] (9, 1.3) circle (0.2cm);
\filldraw[gray] (9, 0.3) circle (0.2cm);

\filldraw[gray] (4, 3.5) circle (0.2cm);
\draw (4, 3.5) node {\textbf{$u$}};
\filldraw[gray] (6, 3.5) circle (0.2cm);
\draw (6, 3.5) node {$v$};

\filldraw[gray] (4, 2.5) circle (0.2cm);
\draw (4, 2.5) node {\textbf{$u$}};
\filldraw[gray] (6, 2.5) circle (0.2cm);
\draw (6, 2.5) node {$v$};

\filldraw[gray] (4, 1.5) circle (0.2cm);
\draw (4, 1.5) node {\textbf{$u$}};
\filldraw[gray] (6, 1.5) circle (0.2cm);
\draw (6, 1.5) node {$v$};

\filldraw[gray] (4, 0.5) circle (0.2cm);
\draw (4, 0.5) node {\textbf{$u$}};
\filldraw[gray] (6, 0.5) circle (0.2cm);
\draw (6, 0.5) node {$v$};

\draw (5, 3.7) node {$e$};

\draw (0, 3.5) node {$i$};
\draw (0, 2.5) node {$i - 1$};
\draw (0, 1.5) node {$i - 2$};
\draw (0, 0.5) node {$i - 3$};

\draw (1.75, 2.5) node {$*$};
\draw (1.75, 1.5) node {$*$};
\draw (1.75, 0.5) node {$*$};
\draw (8.25, 2.5) node {$*$};
\draw (8.25, 1.5) node {$*$};
\draw (8.25, 0.5) node {$*$};

\end{tikzpicture}
\caption{$e \in$ \m M. The edges marked $*$ are not in \m H and are in $M^*$. The edges
which are not marked $*$ are all in \m H.
All the edges marked by $*$ are indirectly
mapped to $e$. }
\label{figure1}
\end{figure}

Indirectly mapped edges can only be of the fourth kind in which the edge is not
in \m H, but is adjacent to an edge in \m H, which in turn is adjacent to $e$.
By definition, these edges are from a level lower than that of $e$. There can be at most
two edges from each level lower than Level$(e)$ which are in \m H and are adjacent
to $e$(see Figure \ref{figure1}). 
\begin{claim}
There can be at most two indirectly mapped edges to an edge $e \in \mathcal M$ at level $< Level(e)$.
\label{claim2}
\end{claim}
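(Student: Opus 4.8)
The plan is to fix an edge $e = (u,v) \in \mathcal M$ with $Level(e) = i$ and a single level $j < i$, and to show that at most two type-$4$ edges of $M^*$ lying at level $j$ can be indirectly mapped to $e$; this is exactly what the preceding sentence asserts per level. First I would unwind the definition of the type-$4$ mapping: an indirectly mapped edge $e^*$ at level $j$ satisfies $e^* \notin E(\mathcal H)$, is adjacent to some edge $f \in M_j$ with $f \notin \mathcal M$, and $f$ in turn is adjacent to $e \in \mathcal M$ with $Level(f) = j < i$. Because $f \in M_j$ it lives at exactly level $j$, i.e.\ at the same level as $e^*$, so the whole chain $e^*\,\text{--}\,f\,\text{--}\,e$ sits entirely at level $j$ except for its anchor $e$.

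The next step is to locate where these edges attach. Since $e^*$ is indirectly (as opposed to directly) mapped to $e$, by definition it is not adjacent to $e$, so $e^*$ is incident on neither $u$ nor $v$. The intermediate edge $f$ is adjacent to $e$ and is distinct from $e$ (as $f \notin \mathcal M$ while $e \in \mathcal M$), so in a simple graph $f$ shares exactly one endpoint with $e$, say $u$; write $f = (u,b)$. Then $e^*$, being adjacent to $f$ but avoiding $u$ and $v$, must be incident on the other endpoint $b$ of $f$. This pins down the structure as $f = (u,b)$ and $e^* = (b,c)$ with $c \notin \{u,v\}$.

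Now I would carry out a two-level counting argument. For the intermediate edges: since $M_j$ is a matching, at most one of its edges is incident on $u$ and at most one on $v$, so there are at most two candidate edges $f \in M_j$ adjacent to $e$ — one anchored at $u$, one at $v$. For the final edges: fixing such an $f = (u,b)$, every indirectly mapped $e^*$ routed through $f$ is incident on the single vertex $b$, and since $M^*$ is a matching (Observation~\ref{obs1}) at most one edge of $M^*$ is incident on $b$. Hence each intermediate edge carries at most one indirectly mapped $e^*$, and the two possible intermediate edges together carry at most two such edges at level $j$, establishing the claim.

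The main obstacle here is bookkeeping rather than mathematical depth: I must be careful to confirm that the chain $e^*\,\text{--}\,f\,\text{--}\,e$ is genuinely anchored at $u$ or $v$, so that the matching property of $M_j$ caps the intermediate edges at two, and that no $e^*$ is double-counted when it happens to be adjacent to both candidate intermediate edges — but since $\phi$ is a function, each $e^*$ is assigned to a single target, and the bound of two per level holds regardless. Once this per-level bound is secured, summing the contribution $2\alpha^{j+1}$ over all levels $j < i$ yields a geometric series, which is precisely what will be needed to bound $\sum_{e^* \in \phi^{-1}_{I}(e)} w(e^*)$ in the subsequent weight analysis.
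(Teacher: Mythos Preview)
Your proposal is correct and follows essentially the same approach as the paper: bound the number of intermediate edges $f \in M_j$ adjacent to $e$ by two (one per endpoint, since $M_j$ is a matching), then bound the indirectly mapped edges routed through each $f$ by one (since such an $e^*$ must hit the non-$u$/non-$v$ endpoint of $f$, and $M^*$ is a matching). In fact you make explicit the second step, which the paper leaves to the reader and to Figure~\ref{figure1}.
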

%For each one of them we can get one indirectly mapped edge and thus there
%can be at most two indirectly mapped edges from each level below the level of $e$ which are
%mapped to $e$. Figure $4.1$ shows indirectly mapped edges to $e$.
Note that there can be a large number of edges which are
indirectly mapped to $e$. Still we will be able to get a good bound on their total
weight. This is because there can be at most two indirectly mapped edges from
each level and the weight of edges in the levels decreases geometrically as we
go to lower levels.

Assume that $e$ is at level $i$. Summing the weight of edges which are indirectly mapped to $e$, we get
\begin{equation}
\sum_{e^* \in \phi^{-1}_{I}(e)} w(e^*) < 2 \sum_{j = i - 1}^{L^{min}} \alpha^{j + 1} < \frac{2 \alpha^{i + 1}}{\alpha - 1} < \frac{2\alpha w(e)}{\alpha - 1} 
\label{eq2}
\end{equation}
Thus, the total weight mapped to $e$ is -
$$\sum_{e^* \in \phi^{-1}(e)} w(e^*) = \sum_{e^* \in \phi^{-1}_{D}(e)} w(e^*) + \sum_{e^* \in \phi^{-1}_{I}(e)} w(e^*) < w(e) \left ( \frac{2 \alpha}{\alpha - 1} + 2 \alpha \right)$$

As reasoned before, an edge in $M^*$ is mapped to some edge in \m M. So summing this over all the edges in \m M, we get
$$\sum_{e \in  M} w(e) \left ( \frac{2 \alpha}{\alpha - 1} + 2 \alpha \right) > \sum_{e \in M}\sum_{e^* \in  \phi^{-1}(e)} w(e^*) = \sum_{e^* \in M^*}w(e^*)$$

The function $f(\alpha) = \left ( \frac{2 \alpha}{\alpha - 1} + 2 \alpha \right)$ attains its minimum value of $8$ at $\alpha = 2$. So,
if the value of $\alpha$ is picked to be $2$, we get an $8$ approximation
maximum weight matching algorithm. We can state the following theorem.

\begin{theorem}
 There exists a fully dynamic algorithm that maintains 8-MWM for any graph on $n$
  vertices in expected amortized $O(\log n \log \mathcal C)$ time per update.
\end{theorem}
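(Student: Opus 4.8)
The plan is to decouple the statement into two independent claims---a bound on the update time and a bound on the approximation ratio---and then combine them. For the running time, I would start from Theorem~\ref{main-theorem}: since the maximal matching $M_l$ at the affected level is maintained by the algorithm of \cite{baswana2011fully}, a single edge update in $G$ causes only expected amortized $O(\log n)$ edges to change their matched/free status in $\bigcup_i M_i$, i.e.\ expected amortized $O(\log n)$ insertions and deletions in $E(\mathcal H)$. Each such change in $\mathcal H$ is processed by \texttt{AddEdge} or \texttt{DeleteEdge}, which make a constant number of calls to \texttt{HandleFree}; since every recursive call of \texttt{HandleFree} strictly decreases the level, each costs $O(L^{max}-L^{min})$. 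Multiplying gives expected amortized $O(\log n\cdot(L^{max}-L^{min}))$ per update, and the displayed inequality $L^{max}-L^{min}=O(\log\mathcal C)$ converts this to $O(\log n\log\mathcal C)$.

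For the approximation ratio, the heart of the argument is the many-to-one map $\phi:M^{*}\to\mathcal M$. First I would verify that $\phi$ is total, i.e.\ every edge of $M^{*}$ is mapped: an edge $e^{*}\in M^{*}$ either lies in $E(\mathcal H)$ (then the invariant forces it to be in $\mathcal M$ or adjacent to a higher-level edge of $\mathcal M$, cases~1 and~2) or it does not (then maximality of $M_i$ at its own level guarantees an incident edge $e\in M_i$, which is either in $\mathcal M$ or, by the invariant, adjacent to a strictly higher-level edge of $\mathcal M$, cases~3 and~4). Fixing $e\in\mathcal M$ at level $i$, I would then bound $w(\phi^{-1}(e))$ by splitting it into the directly mapped edges $\phi^{-1}_D(e)$ (types 1--3, adjacent to $e$) and the indirectly mapped edges $\phi^{-1}_I(e)$ (type 4). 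By Observations~\ref{obs1}--\ref{obs2} and Claim~\ref{claim1} there are at most two directly mapped edges, each of level $\le i$, giving inequality~(\ref{eq1}); by Claim~\ref{claim2} there are at most two indirectly mapped edges per level strictly below $i$, and because edge weights decay geometrically across levels their total is a convergent geometric series, giving inequality~(\ref{eq2}).

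Summing the two bounds yields $w(\phi^{-1}(e))<w(e)\bigl(\tfrac{2\alpha}{\alpha-1}+2\alpha\bigr)$ for every $e\in\mathcal M$. Since $\phi$ is total, summing over $\mathcal M$ and using that each $e^{*}$ is counted at least once on the right gives $w(M^{*})<\bigl(\tfrac{2\alpha}{\alpha-1}+2\alpha\bigr)\,w(\mathcal M)$. The last step is to minimize $f(\alpha)=\tfrac{2\alpha}{\alpha-1}+2\alpha$ over $\alpha>1$; setting $f'(\alpha)=2-\tfrac{2}{(\alpha-1)^2}=0$ gives $\alpha=2$ and $f(2)=8$, so choosing the level base $\alpha=2$ makes $\mathcal M$ an $8$-MWM. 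Combining this with the time bound establishes the theorem.

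I expect the main obstacle to be the approximation analysis rather than the timing. Specifically, the delicate point is controlling the indirectly mapped edges: a single matched edge $e$ can have arbitrarily many edges of $M^{*}$ mapped to it (potentially two from every lower level), so the argument only succeeds because Claim~\ref{claim2} caps the count at two \emph{per level} and the geometric decay keeps $\sum_{j<i}\alpha^{j+1}$ bounded by $\tfrac{\alpha^{i+1}}{\alpha-1}$. A secondary subtlety worth double-checking is the totality of $\phi$ in cases~3 and~4---one must confirm that the maximal-matching property at the edge's own level really does supply an incident matched edge to anchor the mapping, and that the invariant then propagates it up to a genuine edge of $\mathcal M$.
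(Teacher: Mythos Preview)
Your proposal is correct and follows essentially the same route as the paper: the update-time bound via Theorem~\ref{main-theorem} and the $O(L^{max}-L^{min})=O(\log\mathcal C)$ cost of \texttt{HandleFree}, together with the approximation analysis via the mapping $\phi$, the direct/indirect split yielding inequalities~(\ref{eq1}) and~(\ref{eq2}), and the minimization of $f(\alpha)=\tfrac{2\alpha}{\alpha-1}+2\alpha$ at $\alpha=2$. The subtleties you flag---totality of $\phi$ in cases~3 and~4 and the geometric control of indirectly mapped edges---are exactly the points the paper takes care to justify.
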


\section{Improvements: Fully Dynamic 4.9108-MWM}
We use 
use the method of geometric rounding( see \cite{leah1}, \cite{leah2}) to 
reduce the approximation ratio to 4.9108.  

We choose a random number $r$ from $(0,1]$ and then partition 
the edges as follows. If an edge $e$ has weight 
$w(e) \in [ \alpha^{i+r}, \alpha^{i+r+1})$, then it belongs to level $i$ where $i \ge 0$.
Note that here we assume that the weight of an edge is always greater than $\alpha^{r}$.
This is generally not true but can be handled by initially multiplying the weight of
all edges by $\alpha$. From now on we assume that an edge will always belong to some level $i$.
Define $w_r(e) = \alpha^{i+r}$ if $e$ belongs to level $i$. The algorithm works on the
new weights instead of the original weight of the edge. The working of the algorithm 
is exactly same as in the previous section.

\begin{lemma}
For an edge $e$, $E_r[w_r(e)/ w(e)] = \frac{\alpha-1}{\alpha \log \alpha}$. Also $w(e) \ge w_r(e)$.
\label{randomexplemma}
\end{lemma}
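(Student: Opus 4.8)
The plan is to reduce the expectation to a single elementary integral by identifying the distribution of the ratio $w_r(e)/w(e)$ as $r$ varies. First I would fix the edge $e$ and set $t = \log_\alpha w(e)$, so that $w(e) = \alpha^t$. By the partitioning rule, $e$ lies at the unique level $i$ with $\alpha^{i+r} \le w(e) < \alpha^{i+r+1}$, i.e. $i + r \le t < i + r + 1$, which forces $i = \lfloor t - r\rfloor$. Hence $w_r(e) = \alpha^{i+r} = \alpha^{\lfloor t-r\rfloor + r}$ and
\[
\frac{w_r(e)}{w(e)} = \alpha^{\lfloor t-r\rfloor + r - t} = \alpha^{-\{t-r\}},
\]
where $\{x\} = x - \lfloor x\rfloor \in [0,1)$ denotes the fractional part. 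The second assertion $w(e)\ge w_r(e)$ is then immediate, since $\{t-r\}\ge 0$ and $\alpha>1$ make the exponent nonpositive.

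Next I would determine the distribution of $\{t-r\}$ when $r \sim \mathrm{Uniform}(0,1]$. Since $t$ is a fixed constant, the map $r\mapsto t-r$ carries $(0,1]$ bijectively and measure-preservingly onto the half-open unit-length interval $[t-1,t)$. The fractional-part map sends any interval of the form $[a,a+1)$ bijectively onto $[0,1)$ while preserving Lebesgue measure, so $\{t-r\}$ is uniformly distributed on $[0,1)$. This is the step I expect to demand the most care: one must check that the floor appearing in the level assignment behaves consistently across the single integer that $t-r$ may cross as $r$ sweeps $(0,1]$, so that every value in $[0,1)$ is attained exactly once and the induced law is genuinely uniform rather than merely supported on $[0,1)$.

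With uniformity established, the expectation becomes an ordinary integral,
\[
E_r\!\left[\frac{w_r(e)}{w(e)}\right] = \int_0^1 \alpha^{-x}\,dx = \int_0^1 e^{-x\log\alpha}\,dx = \frac{1 - e^{-\log\alpha}}{\log\alpha} = \frac{1 - 1/\alpha}{\log\alpha} = \frac{\alpha-1}{\alpha\log\alpha},
\]
where $\log$ denotes the natural logarithm, as the claimed value requires. This yields both parts of the lemma. The only remaining subtlety is the degenerate case flagged in the text, that $w(e)$ might fall below $\alpha^{r}$ and admit no nonnegative level; this is handled exactly as the paper proposes, by pre-scaling every weight by a factor of $\alpha$, which shifts every $t$ upward by $1$ and leaves the ratio $\alpha^{-\{t-r\}}$, and hence the entire computation, unchanged.
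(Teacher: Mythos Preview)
Your proof is correct and follows essentially the same approach as the paper: both compute the expectation by direct integration over $r\in(0,1]$. The paper writes $w(e)=\alpha^{i+\delta}$ with $0<\delta\le 1$ and splits the integral at $r=\delta$ into the two cases $w_r(e)=\alpha^{i+r}$ and $w_r(e)=\alpha^{i-1+r}$, whereas you absorb this split into the single identity $w_r(e)/w(e)=\alpha^{-\{t-r\}}$ and invoke uniformity of the fractional part; the resulting integral and the argument for $w(e)\ge w_r(e)$ are identical.
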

\begin{proof}
Let $w(e) = \alpha^{i+\delta}$ where $i$ is an integer and $0 < \delta \le 1$. 

So $   w_r(e) = \begin{cases}
		\alpha^{r+i}, & \text{if} \ \ r \le \delta ,\\
        \alpha^{r+i-1}, & \text{if} \ \ r > \delta 
        \end{cases}
 $ 
 
The expected value can be calculated as:\\
\begin{tabular}{lllll}
$E_r[w_r(e)/w(e)]$ & = & $\displaystyle\int_{0}^{\delta} \frac{\alpha^{r+i}}{\alpha^{i+\delta}} dr + \int^{1}_{\delta} \frac{\alpha^{r+i-1}}{\alpha^{i+\delta}} dr$\\
 			& = & $\displaystyle\frac{1}{\alpha^{\delta} \ln \alpha} \Big(\big(\alpha^{r}\big)_{0}^{\delta} + \big( \alpha^{r-1} \big)_{\delta}^{1}\Big)$\\
 			& = & $\displaystyle\frac{1}{\alpha^{\delta} \ln \alpha} \Big(\big(\alpha^{\delta}-1\big) + \big( 1-\alpha^{\delta-1} \big)\Big) $\\\\
 			& = & $\displaystyle\frac{\alpha-1}{\alpha \ln \alpha}$
\end{tabular}

For the second statement of the lemma, if $e$ is at level $i$
then $w(e) \in [ \alpha^{i+r}, \alpha^{i+r+1})$ and $w_r(e) = \alpha^{i+r}$.
So $w(e) \ge w_r(e)$.
\end{proof}

Following the algorithm in the previous section but working on the new weights $w_r(e)$, we 
observe that all the edges at level $i$ have same new weights. The problem we faced 
in the previous section was that if $e \in M$ was at level $i$, then there could be
two edges at level $i$ adjacent to $e$ in $M^*$. In the worst case, the weight of $e$ 
could be $2^i$ and the weight of the edges in $M^*$ could be $2^{i+1}$.
The same reasoning also true for the other mapped edges mapped 
to $e$ from lower level. This was the reason we got a slightly higher approximation ratio
in the previous section. But using the new weights, we see that all the edges at a particular 
level have same weights. 

Using Lemma \ref{randomexplemma} and using linearity of expectation, we get
\begin{equation}
E_r[ \sum_{e^* \in M^*} w_r(e^*)] = \frac{\alpha-1}{\alpha \ln \alpha} \sum_{e^* \in M^*} w(e)
\label{eq3}
\end{equation}
 
We now follow the analysis used in the previous section.
Let $e$ be an edge at level $i$. Let $\phi_D^{-1}(e)$ be the edges directly mapped to $e$.
Using Claim \ref{obs2}, there can at most be two such edges. In the worst case, both
of them can be at level $i$. Summing the weight of edges directly mapped to $e$, we get

\begin{equation}
\sum_{e^* \in \phi^{-1}_{D}(e)} w_r(e^*) = 2 \alpha^{i + r} 
\label{eq4}
\end{equation}

Let $\phi_I^{-1}(e)$ be the edges indirectly mapped to $e$.
Using Claim \ref{claim2}, there can be at most two indirectly mapped edges at level 
less than $i$. Also there can at most be two such edges per level. Summing up the 
weights of indirectly mapped edges, we get
\begin{equation}
\sum_{e^* \in \phi^{-1}_{I}(e)} w_r(e^*) = 2 \sum_{j = i - 1}^{L^{min}} \alpha^{j + r}
\label{eq5}
\end{equation}

Summing up Equation \ref{eq4} and \ref{eq5}, we get the sum of the edges mapped to $e$,
$$\sum_{e^* \in \phi^{-1}(e)} w_r(e^*) = \sum_{e^* \in \phi^{-1}_{D}(e)} w_r(e^*) + \sum_{e^* \in \phi^{-1}_{I}(e)} w_r(e^*) = 
2 \sum_{j = i}^{L^{min}} \alpha^{j + r} < 2 \frac{\alpha^{i+r+1}}{\alpha-1} = \frac{2 \alpha w_r(e)}{\alpha-1}$$

Let $M_r$ be the matching obtained by our algorithm for a particular $r$.
Again following the analysis in previous section, this implies that\\
\begin{equation}
\displaystyle\sum_{e \in M_r} w_r(e) \ge \frac{\alpha-1}{2\alpha} \sum_{e^* \in M^*} w_r(e^*)
\label{eq6}
\end{equation}

\begin{theorem}
The expected approximation ratio achieved by the algorithm is $\frac{2\alpha^2 \ln \alpha}{(\alpha-1)^2}$.
The ratio is minimized when $\alpha \approx 3.512$ and the approximation ratio obtained is $\approx$ 4.9108.
\end{theorem}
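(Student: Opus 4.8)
The plan is to chain the three facts already established for a fixed choice of the random offset $r$ and then take expectation over $r$. Fix $r$ and let $M_r$ be the matching our algorithm maintains using the rounded weights $w_r$. The second statement of Lemma~\ref{randomexplemma} gives $w(e) \ge w_r(e)$ edge by edge, so $w(M_r) = \sum_{e\in M_r} w(e) \ge \sum_{e \in M_r} w_r(e) = w_r(M_r)$. Combining this with Equation~\ref{eq6}, which bounds the rounded weight of $M_r$ against the rounded weight of the (fixed) optimum $M^*$, I obtain the pointwise bound $w(M_r) \ge \frac{\alpha-1}{2\alpha}\, w_r(M^*)$ valid for every value of $r$.

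Next I would take the expectation over $r \in (0,1]$. Since $M^*$ is the maximum weight matching with respect to the original weight $w$, it does not depend on $r$; hence Equation~\ref{eq3} applies and $E_r[w_r(M^*)] = \frac{\alpha-1}{\alpha\ln\alpha}\, w(M^*)$. Taking expectations of the pointwise bound and substituting yields $E_r[w(M_r)] \ge \frac{\alpha-1}{2\alpha}\cdot\frac{\alpha-1}{\alpha\ln\alpha}\, w(M^*) = \frac{(\alpha-1)^2}{2\alpha^2\ln\alpha}\, w(M^*)$. Rearranging, the expected approximation ratio $w(M^*)/E_r[w(M_r)]$ is at most $\frac{2\alpha^2\ln\alpha}{(\alpha-1)^2}$, which is the claimed expression.

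It remains to minimize $g(\alpha) = \frac{2\alpha^2\ln\alpha}{(\alpha-1)^2}$ over $\alpha > 1$. I would differentiate and factor: writing $g'(\alpha)$ over a common denominator, the numerator factors as $2\alpha(\alpha-1)^{-3}\big[(2\ln\alpha+1)(\alpha-1) - 2\alpha\ln\alpha\big]$, and the bracket collapses, since the $2\alpha\ln\alpha$ terms cancel, to leave the critical-point condition $\alpha - 1 = 2\ln\alpha$. This is the crux and also the main obstacle: the equation is transcendental and has no closed form, so I would argue that besides the trivial root $\alpha = 1$ there is a unique root $\alpha > 1$ (e.g.\ by showing $h(\alpha)=\alpha-1-2\ln\alpha$ dips negative just above $1$ and is eventually increasing, hence crosses zero exactly once), solve it numerically to get $\alpha \approx 3.512$, and confirm via the sign change of $g'$ that this critical point is the global minimum on $(1,\infty)$.

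Finally I would evaluate $g$ at the numerical root: substituting $\alpha \approx 3.512$, for which $\ln\alpha \approx 1.256$ and $\alpha - 1 \approx 2.512$, gives $g(\alpha) \approx 4.9108$, completing the proof. The only genuinely delicate point is the numerical minimization, since both the location of the minimizer and the resulting constant must be computed rather than derived symbolically; everything else is a short substitution using the previously established equations.
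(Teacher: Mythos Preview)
Your proof is correct and follows essentially the same approach as the paper: both combine Equation~\ref{eq6}, Equation~\ref{eq3}, and the inequality $w(e)\ge w_r(e)$ from Lemma~\ref{randomexplemma}, differing only in that you apply the last inequality pointwise before taking expectation rather than after. You also supply the calculus for the minimization (deriving the critical-point equation $\alpha-1=2\ln\alpha$), which the paper merely asserts.
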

\begin{proof}
Using Equation \ref{eq6} and taking expectation on both sides we get,\\
\begin{tabular}{llll}
$\displaystyle E_r[ \sum_{e \in M_r} w_r(e)]$ & $\ge$ & $\displaystyle\frac{\alpha-1}{2\alpha} E_r[\sum_{e^* \in M^*} w_r(e^*)]$ \\
  										& $\ge$ & $\displaystyle\frac{(\alpha-1)^2}{2\alpha^2 \ln \alpha} \sum_{e^* \in M^*} w(e^*)$ & Using Equation \ref{eq3}.
\end{tabular}

Using Lemma \ref{randomexplemma}, $w(e) \ge w_r(e)$ for each edge $e$ and we get\\
$\displaystyle E_r[\sum_{e \in M_r} w(e)] \ge \frac{(\alpha-1)^2}{2\alpha^2 \ln \alpha} \sum_{e^* \in M^*} w(e^*)$ \\
This implies that the expected approximation ratio obtained by our algorithm is  $\frac{2\alpha^2 \ln \alpha}{(\alpha-1)^2}$.
It achieves its minimum value when $\alpha \approx 3.512$ and the minimum expected approximation ratio
obtained is $\approx$ 4.9108.
\end{proof}

\begin{theorem}
 There exists a fully dynamic algorithm that maintains expected 4.9108-MWM  for any graph on $n$
  vertices in expected amortized $O(\log n \log \mathcal C)$ time per update.
\end{theorem}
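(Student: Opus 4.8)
The plan is to obtain this statement by packaging the two facts already proved for the randomized-scaling variant of Section~\ref{fullydynamic}: the expected approximation ratio from the immediately preceding theorem, and the update-time bound from the running-time claim of Section~\ref{fullydynamic}. Concretely, I would first argue that the time analysis carries over unchanged to the scaled algorithm run with the constant $\alpha \approx 3.512$, and then invoke the expected approximation guarantee, checking finally that the two expectations compose.

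For the running time, the key observation is that the scaled algorithm differs from the algorithm of Section~\ref{fullydynamic} only cosmetically: the level boundaries are shifted by the single random offset $r \in (0,1]$ chosen once at the start, and the procedures read the rounded weights $w_r$ instead of $w$. The procedures \AddEdge, \DeleteEdge, \HandleFree and \EdgeUpdate are otherwise identical, so each of the expected amortized $O(\log n)$ status changes that an update to $G$ induces in $\bigcup_i M_i$ is still processed in $O(L^{max}-L^{min})$ time. It therefore suffices to re-bound $L^{max}-L^{min}$. An edge with $w_r(e)=\alpha^{i+r}$ occupies level $i$, so the span of occupied levels is at most $\log_\alpha \mathcal C + O(1)$; since $\alpha$ is a fixed constant this equals $O(\log \mathcal C)$, and the $+r$ shift cancels in the difference. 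The preliminary multiplication of every weight by $\alpha$ (used so that every edge lands at a nonnegative level) leaves $\mathcal C = w(e^{max})/w(e^{min})$ unchanged and hence does not enlarge the span. This reproduces the $O(\log n \log \mathcal C)$ expected amortized bound of the running-time claim.

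For the approximation guarantee I would simply invoke the preceding theorem, which gives expected ratio $\approx 4.9108$ at $\alpha \approx 3.512$. The two guarantees hold simultaneously because they draw on separate randomness. Inequality~\ref{eq6} is a worst-case bound valid for every fixed $r$ and every realization of the internal bits of the maximal-matching subroutine \cite{baswana2011fully}, since it follows solely from the always-maintained invariant; taking $E_r$ of it yields the ratio. The time bound, in turn, holds for every fixed $r$ in expectation over those internal bits, under the oblivious adversary for which $r$ and the random bits are hidden. Combining the two produces a single fully dynamic algorithm with expected amortized $O(\log n \log \mathcal C)$ update time that maintains an expected $4.9108$-MWM, which is exactly the claim.

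The only step that is not purely mechanical is confirming that neither the change of the constant $\alpha$ nor the randomized rounding inflates the occupied level range past $O(\log \mathcal C)$; once this is checked the theorem is an immediate consequence of the two preceding results, so I expect essentially no obstacle beyond this bookkeeping.
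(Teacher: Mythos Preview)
Your proposal is correct and matches the paper's approach: the paper states this theorem without a separate proof, treating it as an immediate corollary of the preceding approximation-ratio theorem together with the running-time claim of Section~\ref{fullydynamic}. Your additional bookkeeping (that the random shift $r$ and the change of $\alpha$ to a different constant leave $L^{max}-L^{min}=O(\log\mathcal C)$) is exactly what is needed to justify the packaging and is more explicit than the paper itself.
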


\section{Conclusion}
We presented a fully dynamic algorithms for maintaining matching of large size
or weight in graphs. The algorithm maintains a 8-MWM with 
expected  $O(\log n \log \mathcal C)$ amortized update time.
Using a simple randomized scaling technique,
we are able to obtain  an expected 4.9108-MWM with the same update time.
The algorithm for maintaining 4.9108-MWM is the first
fully dynamic algorithm for maintaining approximate maximum weight matching.

\bibliographystyle{plain}
\bibliography{references}
\end{document}